\definecolor{color1}{RGB}{199,209,232}
\definecolor{color2}{RGB}{230,231,233}
\DeclareMathOperator*{\argmax}{argmax} 
\DeclareMathOperator*{\minimize}{minimize} 
\DeclareMathOperator*{\subjectto}{subject\hspace{3pt} to:\hspace{3pt}} 
\newtheorem{theorem}{Theorem}
\begin{document}
	
\title{\huge Near-Field Terahertz Communications: Model-Based and Model-Free Channel Estimation}
	
	
	\author{\IEEEauthorblockN{Ahmet M. Elbir, \textit{Senior Member, IEEE}, Wei Shi, \textit{Member, IEEE}, Anastasios K. Papazafeiropoulos, \textit{Senior Member, IEEE},
			Pandelis Kourtessis \textit{Member, IEEE}, and   Symeon Chatzinotas, \textit{Fellow, IEEE} }
		\thanks{This work was supported in part by the Natural Sciences and Engineering Research Council of Canada (NSERC) and Ericsson Canada.}
		\thanks{A. M. Elbir is with Interdisciplinary Centre for Security, Reliability and Trust, University of Luxembourg, Luxembourg (e-mail: ahmetmelbir@gmail.com).}
		\thanks{Wei Shi is with the	School of Information Technology, Carleton
			University, Ottawa, Canada (e-mail: wei.shi@carleton.ca ). }	
		\thanks{A. K. Papazafeiropoulos and P. Kourtessis are with the  the Communications and Intelligent Systems Research Group, University of Hertfordshire, Hatfield, U. K. (e-mail: tapapazaf@gmail.com,
			p.kourtessis@herts.ac.uk)} 
		
		\thanks{S. Chatzinotas is with Interdisciplinary Centre for Security, Reliability and Trust, University of Luxembourg, Luxembourg (e-mail: symeon.chatzinotas@uni.lu).}
		
	}
	
	\maketitle


	\begin{abstract}
		Terahertz (THz) band is expected to be one of the key enabling technologies of the sixth generation (6G) wireless networks because of its abundant available bandwidth and very narrow beam width.  Due to high frequency operations, electrically small array apertures are employed, and the signal wavefront becomes spherical in the near-field. Therefore, near-field signal model should be considered for channel acquisition in THz systems.  Unlike prior works which  mostly ignore the impact of near-field beam-split (NB) and consider either narrowband scenario or far-field models, this paper introduces both a model-based and a model-free techniques for wideband THz channel estimation in the presence of NB. The model-based approach is based on orthogonal matching pursuit (OMP) algorithm, for which we design an NB-aware dictionary. The key idea is to exploit the angular and range deviations due to the NB. We then employ the OMP algorithm, which accounts for the deviations thereby ipso facto mitigating the effect of NB. We further introduce a federated learning (FL)-based approach as a  model-free solution for channel estimation  in a multi-user scenario to achieve reduced complexity and training overhead.	Through numerical simulations, we demonstrate the effectiveness of the proposed channel estimation techniques for wideband THz systems in comparison with the existing state-of-the-art techniques.
	\end{abstract}

	\begin{IEEEkeywords}
		Beam-split,	 channel estimation, federated learning, machine learning, near-field, orthogonal matching pursuit, sparse recovery, terahertz.
	\end{IEEEkeywords}
\section{Introduction}
\label{sec:Introduciton}
Terahertz (THz) band is expected to be a key component of the sixth generation  (6G) of wireless cellular networks because of its abundant available bandwidth. In particular, THz-empowered systems are envisioned to demonstrate revolutionary enhancement under high data rate ($>100\text{Gb/s}$), extremely low propagation latency ($<1\text{ms}$) and ultra reliability ($99.999\%$)~\cite{thz_Rappaport2019Jun,thz_Akyildiz2022May}.

Although demonstrating the aforementioned advantages, signal processing  in THz band faces several THz-specific challenges that should be taken into account accordingly. These challenges include, among others, severe path loss due to spreading loss and molecular absorption, extremely-sparse path model (see, e.g., Fig.~\ref{fig_PathLoss}), very short transmission distance and \textit{beam-split} (see the full list in~\cite{elbir2022Aug_THz_ISAC,ummimoTareqOverview}). In order to combat some of these challenges, e.g., path loss, analogues to massive multiple-input multiple-output (MIMO) arrays in millimeter-wave (mm-Wave) systems~\cite{heath2016overview,elbir2022Nov_Beamforming_SPM}, ultra-massive MIMO  architectures are envisioned, wherein subcarrier-independent (SI) analog beamformers are employed. In wideband signal processing, the weights of the analog beamformers are subject to a single subcarrier frequency,  i.e., carrier frequency~\cite{alkhateeb2016frequencySelective,hybridBFAltMin}. Therefore, the directions of the generated beams at different subcarriers differentiate and point to different directions causing {beam-split} phenomenon (see, e.g., Fig.~\ref{fig_ArrayGain_BS})~\cite{thz_beamSplit,elbir2022Aug_THz_ISAC}.

\begin{figure}[h]
	\centering
	{\includegraphics[draft=false,width=.5\columnwidth]{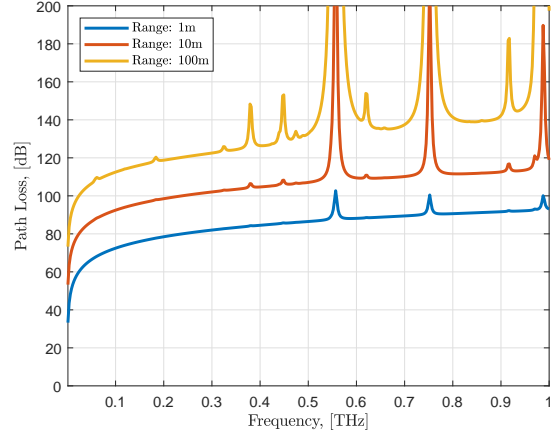} } 
	\caption{Path loss (in dB) due to molecular absorption for various transmission ranges.	}
	\label{fig_PathLoss}
\end{figure}

\subsection{Related Works}
While wideband mm-Wave channel estimation has been extensively studied in the literature~\cite{limitedFeedback_Alkhateeb2015Jul,heath2016overview,channelEstLargeArrays2,channelEstLargeArrays,channelEstimation1}, wideband THz channel estimation, on the other hand is relatively new~\cite{thz_Akyildiz2022May}. Specifically, the existing solutions are categorized into two classes, i.e., hardware-based techniques~\cite{dovelos_THz_CE_channelEstThz2} and algorithmic methods~\cite{thz_channelEst_beamsplitPatternDetection_L_Dai,elbir2022Jul_THz_CE_FL}. The first category of solutions consider employing time delayer (TD) networks together with phase shifters so that true-time-delay (TTD) of each subcarrier can be obtained for beamformer design. The TD networks are used to realize virtual SD analog beamformers so that the impact of beam-slit can be mitigated for hybrid beamforming~\cite{elbir2022_thz_beamforming_Unified_Elbir2022Sep}.  In particular, \cite{dovelos_THz_CE_channelEstThz2} devises a generalized simultaneous orthogonal matching pursuit (GSOMP) technique by exploiting the SD information collected via TD network hence achieves close to minimum mean-squared-error (MMSE) performance. However, these solutions require additional hardware, i.e., each phase shifter is connected to multiple TDs, each of which consumes approximately $100$ mW, which is more than that of a phase shifter ($40$ mW) in THz~\cite{elbir2022Aug_THz_ISAC}. The second category of solutions do not employ additional hardware components. Instead, advanced signal processing techniques have been proposed to compensate beam-split. Specifically, an OMP-based beam-split pattern detection (BSPD) approach was proposed in~\cite{thz_channelEst_beamsplitPatternDetection_L_Dai} for the recovery of the support pattern  among all subcarriers in the beamspace and construct one-to-one match between the physical and spatial (i.e., deviated due to beam-split in the beamspace) directions. Also,  \cite{spatialWidebandWang2018May} proposed an angular-delay rotation method, which suffers from coarse beam-split estimation and high training overhead due to the use of complete discrete Fourier transform (DFT) matrix. In \cite{elbir2022Jul_THz_CE_FL},  a beamspace support alignment (BSA) technique was introduced to align the deviated spatial beam directions among the subcarriers. Although both BSPD and BSA are based on OMP, the latter exhibits lower normalized MSE (NMSE) for THz channel estimation. Nevertheless, both methods suffer from inaccurate support detection and low precision for estimating the physical channel directions~\cite{elbir_THZ_CE_ArrayPerturbation_Elbir2022Aug}.

Besides the aforementioned model-based channel estimation techniques, model-free approaches, such as machine learning (ML), have also been suggested for THz channel estimation~\cite{thzCE_kernel,thzCE_CNN}. For instance, ML-based learning models such as  deep convolutional neural network (DCNN)~\cite{thzCE_CNN}, generative adversarial network (GAN)~\cite{thz_CE_GAN_Balevi2021Jan} and  deep kernel learning (DKL)~\cite{thzCE_kernel}, have been proposed to lower the  complexity involved during the channel inference as well as the complexity arising from the usage of the ultra-massive antenna elements. 
However, the works~\cite{thzCE_kernel,thzCE_CNN} only consider the narrowband THz systems, which do not exploit wideband scenario, which is the main reason to climb up to the THz-band to achieve better communication performance. In~\cite{thz_CE_GAN_Balevi2021Jan}, the wideband scenario is considered but the effect of beam-split is ignored. In~\cite{elbir2022Jul_THz_CE_FL}, a federated learning (FL)-based THz channel estimation is proposed for wideband systems in the presence of beam-split. However, the algorithm in \cite{elbir2022Jul_THz_CE_FL} is based on the far-field user assumption, which may not be satisfied in THz-band scenario.

The aforementioned THz works~\cite{dovelos_THz_CE_channelEstThz2,elbir2022Jul_THz_CE_FL,elbir_THZ_CE_ArrayPerturbation_Elbir2022Aug,spatialWidebandWang2018May,elbir2022_thz_beamforming_Unified_Elbir2022Sep,thz_channelEst_beamsplitPatternDetection_L_Dai} as well as  the conventional wireless systems operating at sub-$6$ GHz and mm-Wave bands~\cite{limitedFeedback_Alkhateeb2015Jul,channelEstLargeArrays2,channelEstLargeArrays,channelEstimation1}  mostly incorporate far-field plane-wave model whereas the transmission range is shorter in THz-band such that the users are usually in the near-field region~\cite{ummimoTareqOverview}. Specifically, the plane wavefront is spherical in the near-field when the transmission range is shorter than the Fraunhofer distance~\cite{nf_primer_Bjornson2021Oct}. As a result, the channel acquisition algorithms should take into account near-field model (see, e.g., Fig.~\ref{fig_BS}), which depends on both direction and range information  for accurate signal processing~\cite{elbir2022Aug_THz_ISAC}. Among the works investigating the near-field signal  model,  \cite{nf_OMP_Dai_Wei2021Nov,nf_mmwave_CE_noBeamSplit_Cui2022Jan,nf_hybridNF_FF_Yu2022May,nf_NB_Zhang2022Dec,nf_NB2_Zhang2022Nov} consider the near-field scenario while the effect of beam-split is ignored and only mm-Wave scenario is investigated. In particular, \cite{nf_NB_Zhang2022Dec} considers the wideband mm-Wave channel estimation while the authors in \cite{nf_NB2_Zhang2022Nov} devise a machine learning (ML)-based approach for narrowband near-field channel estimation. In addition to near-field-only model, hybrid (near- and far-field) models are also present in the literature~\cite{nf_OMP_Dai_Wei2021Nov,nf_hybridNF_FF_Yu2022May}, wherein only narrowband transceiver architectures are considered. On the other hand,  several methods have been proposed to compensate the far-field beam-split for both THz channel estimation~\cite{elbir2022Jul_THz_CE_FL,elbir_THZ_CE_ArrayPerturbation_Elbir2022Aug,dovelos_THz_CE_channelEstThz2,thz_channelEst_beamsplitPatternDetection_L_Dai} and beamforming~\cite{elbir2021JointRadarComm,elbir2022_thz_beamforming_Unified_Elbir2022Sep,delayPhasePrecoding_THz_Dai2022Mar} applications.  Nevertheless, THz channel estimation in the presence of \textit{near-field beam-split (NB)} remains relatively unexamined.  

\begin{figure*}[h]
	\centering
	{\includegraphics[draft=false,width=\textwidth]{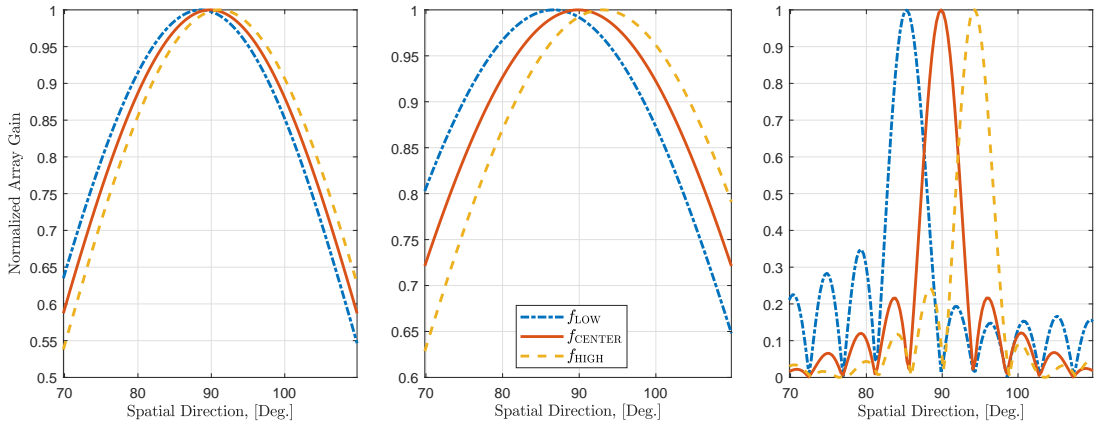} } 
	\caption{Normalized array gain with respect to spatial direction at low, center and high end subcarriers for (left) $3.5$ GHz, $B=0.1$ GHz; (middle) $28$ GHz, $B=2$ GHz; and (right) $300$ GHz, $30$ GHz, respectively.    	}
	\label{fig_ArrayGain_BS}
\end{figure*}

\subsection{Contributions}
In this work, we introduce both a model-based and a model-free techniques for near-field THz channel estimation in the presence of NB.  In the model-based approach, we propose an NB-aware (NBA) THz channel estimation technique based on OMP, henceforth called NBA-OMP. For the model-free approach, we devise a federated learning (FL) approach, which is more communication-efficient as compared to conventional centralized learning (CL)-based methods.

We design a novel NBA dictionary, whose columns are composed of near-field subcarrier-dependent (SD) steering vectors spanning the whole angular spectrum and the transmission ranges up to the Fraunhofer distance. Then, we introduce our proposed NBA-OMP  approach for wideband THz channel estimation.

The key idea of the proposed approach is that the degree of beam-split is proportionally known prior to the direction-of-arrival (DoA)/range estimation while it depends on the unknown user location. For example, consider $f_m$ and $f_c$ to be the frequencies for the $m$-th and center subcarriers, respectively. When $\theta$ is the physical user direction,  the spatial direction corresponding to the $m$-th-subcarrier is shifted by $\frac{f_c}{f_m}\theta$. Thus, we  employ the OMP algorithm, which accounts for this deviation thereby \textit{ipso facto} compensating the effect of NB.  We have conducted several numerical experiments to demonstrate the effectiveness of the proposed approach in comparison with the existing THz channel-estimation-based methods~\cite{nf_OMP_Dai_Wei2021Nov,thz_channelEst_beamsplitPatternDetection_L_Dai}.

\subsection{Outline and Notation}
\subsubsection{Outline} In the remainder of the paper, we first introduce the signal model for multi-user wideband THz ultra-massive MIMO system in Sec.~\ref{sec:probForm}. Then, we present the proposed NBA OMP technique for a model-based and a model-free channel estimation in Sec.~\ref{sec:NBAOMP} and Sec.~\ref{sec:FL}, respectively. The complexity and overhead analysis of the proposed approaches are discussed in Sec.~\ref{sec:Cpmplexity}. Sec.~\ref{sec:Sim} presents the numerical simulations and we finalize the paper in Sec.~\ref{sec:Conc} with concluding remarks.

\subsubsection{Notation}

Throughout the paper, we denote the vector and matrices via bold lowercase and uppercase letters, respectively. The transpose and conjugate transpose operations are denoted by $(\cdot)^\textsf{T}$ and $(\cdot)^{\textsf{H}}$, respectively. We denote the $n$-th column of a matrix $\mathbf{A}$ as $\mathbf{A}_n$ and  $\mathbf{A}^{\dagger}$ represents the Moore-Penrose pseudo-inverse. For a vector $\mathbf{a}$, the $n$-th element of $\mathbf{a}$ is represented by $[\mathbf{a}]_n$. $\lceil\cdot \rceil$ is the ceiling operator, $\nabla$ represents the gradient operation.  $\Sigma(a) = \frac{\sin \pi ab }{b\sin \pi a}$ denotes the Dirichlet sinc function, and $\mathbb{E}\{\cdot\}$ stands for the expectation operation.  $|| \cdot ||_2$ and $|| \cdot ||_\mathcal{F}$ denote the $l_2$ and Frobenius norms, respectively.

\section{System Model}
\label{sec:probForm}
Consider a wideband THz MIMO architecture with hybrid analog/digital beamforming over $M$ subcarriers. We assume that the base station (BS) has $N$ antennas and $N_\mathrm{RF}$ radio-frequency (RF) chains to serve $K$ single-antenna users. Let  $\mathbf{s}[m] = [s_1[m],\cdots,s_K[m]]^\textsf{T}$ denote the data symbols, where $m\in \mathcal{M} = \{1,\cdots, M\}$, which are processed via a $K\times K$ SD baseband beamformer $\mathbf{F}_\mathrm{BB}[m] = [\mathbf{f}_{\mathrm{BB}_1}[m],\cdots,\mathbf{f}_{\mathrm{BB}_K}[m]]$. In order to steer the generated beams toward users in downlink, an  $N\times N_\mathrm{RF}$ SI analog beamformer $\mathbf{F}_\mathrm{RF}$ ($N_\mathrm{RF}=K<{N}$) is employed. Since the analog beamformers are realized with phase-shifters, they have constant-modulus constraint, i.e., $|[\mathbf{F}_\mathrm{RF}]_{i,j}| = \frac{1}{\sqrt{N}}$. Then, the $N\times 1$ transmitted signal becomes
\begin{align}
\tilde{\mathbf{x}}[m] =  \mathbf{F}_\mathrm{RF}\mathbf{F}_\mathrm{BB}[m]\mathbf{s}[m].
\end{align}

The transmitted signal is passed through the wireless channel, and it is received by the $k$-th user at the $m$-th subcarrier as 
\begin{align}
\label{receivedSignal}
{y}_{k}[m] &= \mathbf{h}_k^\textsf{T}[m] \tilde{\mathbf{x}}[m] + {w}_k[m] \nonumber \\
&= \mathbf{h}_{k}^\textsf{T}[m]\sum_{i = 1}^{K}\mathbf{F}_\mathrm{RF}\mathbf{f}_{\mathrm{BB}_i}[m]{s}_i[m]  + {w}_k[m],
\end{align}
where ${w}_k[m]\in \mathbb{C}$ represents the complex additive white Gaussian noise with variance of $\sigma_n^2$, i.e., ${w}_k[m] \sim \mathcal{CN}({0},\sigma_n^2)$.

\subsection{THz Channel Model}
Due to limited reflected path components and negligible scattering, the THz channel is usually constructed as the superposition of a single LoS path with a few assisting NLoS paths~\cite{ummimoTareqOverview,elbir2021JointRadarComm,thz_beamSplit}.	In addition, multipath channel models are also widely used, especially for indoor applications~\cite{teraMIMO,ummimoTareqOverview}. Hence, we consider a general scenario, wherein the $N\times 1$ channel matrix for the $k$-th user at the $m$-th subcarrier is represented by the combination of $L$ paths as~\cite{ummimoTareqOverview}
\begin{align}
\label{channelModel}
\mathbf{h}_k[m]  =  
\sqrt{\frac{N}{L}}  \sum_{l =1}^{L}   \alpha_{k,m,l} \mathbf{a}(\phi_{k,l},r_{k,l})   e^{-j2\pi\tau_{k,l} f_m },
\end{align}
where  $\tau_{k,l}$ represents the time delay of the $l$-th path corresponding to the array origin.  $\alpha_{k,m,l}\in\mathbb{C}$  denotes the complex path gain and the expected value of its magnitude for the indoor THz multipath model is given by
\begin{align}
\mathbb{E}\{|\alpha_{k,m,l}|^2 \} = \left(\frac{c_0}{4\pi f_m r_{k,l} } \right)^2 e^{- k_\mathrm{abs}(f_m) r_{k,l}  },
\end{align}	
where  $f_m$ is the $m$-th subcarrier frequency, $c_0$ is  speed of light, $r_{k,l}$ represents the distance from the $k$-th user to the array origin and  $k_\mathrm{abs}(f_m)$ is the SD medium absorption coefficient~\cite{ummimoTareqOverview,ummimoHBThzSVModel,thz_clusterBased_Yuan2022Mar}. Furthermore, $f_m = f_c + \frac{B}{M}(m - 1 - \frac{M-1}{2}) $, where  $f_c$ and $B$ are carrier frequency and bandwidth, respectively.

\subsection{Near-field Array Model}
Due to high frequency operations as well as employing extremely small array aperture, THz-band transmission is likely to encounter near-field phenomenon for close-proximity users. Specifically,	the far-field model involves the reception of the transmitted signal at the users as plane-wave. However, the plane wavefront is spherical in the near-field if the transmission range is shorter
than the Fraunhofer distance $F = \frac{2 D^2}{\lambda}$, where $D$ is the array aperture and $\lambda = \frac{c_0}{f_c}$ is the wavelength~\cite{nf_primer_Bjornson2021Oct,elbir_THZ_CE_ArrayPerturbation_Elbir2022Aug}. The illustration of beam-split for both far-field and near-field models is given in Fig.~\ref{fig_BS}. For a uniform linear array (ULA), the array aperture is $D = (N-1)d$, where $d = \frac{\lambda}{2}$ is the element spacing. Thus, for THz-band transmission, this distance becomes small such that the near-field signal model should be employed since $r_{k,l} <F$. For instance, when $f_c = 300$ GHz and $N=256$, the Fraunhofer distance is $F = 32.76$ m.

We define the near-field steering vector $\mathbf{a}(\phi_{k,l},r_{k,l})\in\mathbb{C}^{N}$ corresponding to the physical DoA  $\phi_{k,l}$ and range $r_{k,l}$ as 
\begin{align}
\label{steeringVec1}
\mathbf{a}(\phi_{k,l},r_{k,l}) = \frac{1}{\sqrt{N}} [e^{- \mathrm{j}2\pi \frac{d}{\lambda}r_{k,l}^{(1)} },\cdots,e^{- \mathrm{j}2\pi \frac{d}{\lambda_m}r_{k,l}^{(N)} }]^\textsf{T},
\end{align}
where $\phi_{k,l} = \sin \tilde{\phi}_{k,l}$ with $\tilde{\phi}_{k,l}\in [-\frac{\pi}{2},\frac{\pi}{2}]$, and  $r_{k,l}^{(n)}$ is the distance between the $k$-th user and the $n$-th antenna as
\begin{align}
r_{k,l}^{(n)} = \left(r_{k,l}^2  + 2(n-1)^2 d^2 - 2 r_{k,l}(n-1) d \phi_{k,l}   \right)^{\frac{1}{2}}. \label{eq:rkln}
\end{align}
Following the Fresnel approximation ~\cite{nf_Fresnel_Cui2022Nov,nf_primer_Bjornson2021Oct}, \eqref{eq:rkln} becomes
\begin{align}
\label{r_approx}
r_{k,l}^{(n)} \approx r_{k,l}  - (n-1) d \phi_{k,l}  + (n-1)^2 d^2 \zeta_{k,l}  ,
\end{align}	 
where $\zeta_{k,l} = \frac{1- \phi_{k,l}^2}{2 r_{k,l}}$. Rewrite (\ref{steeringVec1}) as
\begin{align}
\label{steeringVectorPhy}
\mathbf{a}(\phi_{k,l},r_{k,l}) \approx e^{- \mathrm{j}2\pi \frac{f_c}{c_0}r_{k,l}} \tilde{\mathbf{a}}(\phi_{k,l},r_{k,l}),
\end{align} where the $n$-th element of $\tilde{\mathbf{a}}(\phi_{k,l},r_{k,l})\in \mathbb{C}^N$ is 
\begin{align}
\label{steeringVectorPhy2}
[\tilde{\mathbf{a}}(\phi_{k,l},r_{k,l})]_n = e^{\mathrm{j} 2\pi \frac{f_c}{c_0}\left( (n-1)d\phi_{k,l}  - (n-1)^2 d^2 \zeta_{k,l}\right) }.
\end{align}
The steering vector in (\ref{steeringVectorPhy}) corresponds to the physical location $(\phi_{k,l},r_{k,l})$, which deviates to the spatial location $(\bar{\phi}_{k,m,l},\bar{r}_{k,m,l})$ in the beamspace arising from the absent of SD analog beamformers. Then, the $n$-th entry of the deviated steering vector in (\ref{steeringVectorPhy2}) for the spatial location is 
\begin{align}
\label{steeringVectorSpa}
&[\tilde{\mathbf{a}}(\bar{\phi}_{k,m,l},\bar{r}_{k,m,l})]_n = e^{\mathrm{j} 2\pi \frac{f_m}{c_0}\left( (n-1)d\bar{\phi}_{k,m,l}  - (n-1)^2 d^2 \bar{\zeta}_{k,m,l}\right) }.
\end{align}

\subsection{Problem Formulation}
The aim of this work is to estimate the wideband THz channel $\mathbf{h}_k[m]$ in the presence of NB in downlink scenario. 

\textit{Assumption 1:} For the proposed model-based technique, we assume that the users are synchronized and they use the received pilot signals transmitted by the BS. We also assume that all users are in the near-field region of the BS, i.e., $r_{k,l} \leq F$.

\textit{Assumption 2:} For the proposed model-free technique, it is assumed that the datasets of all users are collected prior to the model training stage, and the labels of these datasets are determined by the model-based technique introduced in Sec.~\ref{sec:NBAOMP}.

In what follows, we first introduce the NB model, then present the proposed approaches for model-based and model-free channel estimation in the presence of NB.

\section{NB Model}
Compared to mm-Wave frequencies, in THz-band, the bandwidth is so wide that a single-wavelength assumption for  beamforming cannot hold and it leads to the split of physical DoA/ranges $\{\phi_{k,l},r_{k,l}\}$ in the spatial domain. Hence, we define the relationship between the physical and spatial DoA/ranges in the following theorem:

\begin{theorem}
	Denote $\mathbf{u}\in \mathbb{C}^N $ and $\mathbf{v}_m \in \mathbb{C}^N$ as the arbitrary near-field steering vectors corresponding to the physical (i.e., $\{\phi_{k,l},r_{k,l}\}$) and spatial (i.e., $\{\bar{\phi}_{k,m,l},	\bar{r}_{k,m,l}\}$) locations given in (\ref{steeringVectorPhy2}) and (\ref{steeringVectorSpa}), respectively. Then, in spatial domain at subcarrier frequency $f_m$, the array gain achieved by $\mathbf{u}^\textsf{H}\mathbf{v}_m$ is maximized and the generated beam is focused at the location $\{\bar{\phi}_{k,m,l},	\bar{r}_{k,m,l}\}$ such that  
	\begin{align}
	\label{physical_spatial_directions}
	\bar{\phi}_{k,m,l} =    \eta_m \phi_{k,l}, \hspace{5pt}
	\bar{r}_{k,m,l} =    \frac{1 - \eta_m^2 \phi_{k,l}^2}{\eta_m(1 -\phi_{k,l}^2)}r_{k,l},
	\end{align}
	where  	 $\eta_m = \frac{f_c}{f_m}$ represents the proportional deviation of DoA/ranges.
\end{theorem}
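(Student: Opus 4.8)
The plan is to evaluate the array gain $\mathbf{u}^\textsf{H}\mathbf{v}_m$ explicitly as a sum over the $N$ antenna indices and then to identify the spatial location $\{\bar{\phi}_{k,m,l},\bar{r}_{k,m,l}\}$ that maximizes it. First I would substitute the element-wise expressions (\ref{steeringVectorPhy2}) and (\ref{steeringVectorSpa}), so that each summand is the product of the conjugated physical phase at frequency $f_c$ and the spatial phase at frequency $f_m$. Collecting powers of $(n-1)$ gives
\begin{align}
\mathbf{u}^\textsf{H}\mathbf{v}_m = \sum_{n=1}^{N} e^{\mathrm{j}\psi_n}, \quad \psi_n = \frac{2\pi d}{c_0}\Big[ c_1 (n-1) - d\, c_2 (n-1)^2 \Big],
\end{align}
where $c_1 = f_m\bar{\phi}_{k,m,l} - f_c\phi_{k,l}$ and $c_2 = f_m\bar{\zeta}_{k,m,l} - f_c\zeta_{k,l}$ are the residual linear and quadratic phase coefficients, with $\zeta_{k,l} = (1-\phi_{k,l}^2)/(2r_{k,l})$ and $\bar{\zeta}_{k,m,l} = (1-\bar{\phi}_{k,m,l}^2)/(2\bar{r}_{k,m,l})$.

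Next I would establish the maximization. Since every summand has unit modulus, the triangle inequality gives $|\mathbf{u}^\textsf{H}\mathbf{v}_m| \le N$, with equality precisely when $\psi_n$ is constant in $n$. Because $\psi_n$ is a quadratic polynomial in $(n-1)$ with vanishing constant term, requiring it to be constant across the $N\ge 3$ elements forces both coefficients to vanish, i.e. $c_1 = 0$ and $c_2 = 0$; this is the beam-focusing condition that pins down the spatial location. In the far-field (quadratic-free) special case this reduces to the usual Dirichlet-sinc $\Sigma(\cdot)$ peak at $c_1=0$, and in the continuous-aperture Fresnel-integral treatment the modulus peaks at the stationary-phase point, yielding the same two conditions.

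From $c_1 = 0$ I obtain $f_m\bar{\phi}_{k,m,l} = f_c\phi_{k,l}$, hence $\bar{\phi}_{k,m,l} = \eta_m\phi_{k,l}$ with $\eta_m = f_c/f_m$, which is the first claimed identity. From $c_2 = 0$ I get $\bar{\zeta}_{k,m,l} = \eta_m\zeta_{k,l}$. Substituting the definitions of $\zeta_{k,l}$ and $\bar{\zeta}_{k,m,l}$ together with the already-derived $\bar{\phi}_{k,m,l} = \eta_m\phi_{k,l}$ gives $\frac{1-\eta_m^2\phi_{k,l}^2}{2\bar{r}_{k,m,l}} = \eta_m\frac{1-\phi_{k,l}^2}{2r_{k,l}}$, and solving this scalar equation for $\bar{r}_{k,m,l}$ yields $\bar{r}_{k,m,l} = \frac{1-\eta_m^2\phi_{k,l}^2}{\eta_m(1-\phi_{k,l}^2)}r_{k,l}$, the second claimed identity.

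The main obstacle is the rigor of the maximization step, since the near-field inner product is a quadratic-phase (Fresnel/Gauss) sum rather than the geometric Dirichlet-kernel sum of the far-field case, so the standard sinc peak argument does not apply directly. The unit-modulus triangle-inequality bound cleanly identifies the coefficient-nulling point as the global maximizer, but it is unique only under a small-deviation assumption; otherwise $2\pi$ phase wrap-arounds across elements could in principle admit spurious grating-type peaks. I would therefore state explicitly that the coefficient-nulling solution is the physically relevant focusing point (the one surviving the continuous Fresnel treatment), and treat the remaining algebra for $\bar{r}_{k,m,l}$ as routine.
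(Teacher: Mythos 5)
Your proposal is correct and follows the same overall strategy as the paper: expand $\mathbf{u}^\textsf{H}\mathbf{v}_m$ element-wise, identify the residual linear and quadratic phase coefficients $c_1 = f_m\bar{\phi}_{k,m,l}-f_c\phi_{k,l}$ and $c_2 = f_m\bar{\zeta}_{k,m,l}-f_c\zeta_{k,l}$, set both to zero, and solve the two resulting scalar equations for $\bar{\phi}_{k,m,l}$ and $\bar{r}_{k,m,l}$; the closing algebra is identical. Where you differ is the justification of the maximization step. The paper absorbs the quadratic term into the $n$-dependent quantities $\kappa_m = d(\bar{\phi}_{k,m,l}-nd\bar{\zeta}_{k,m,l})$ and $\kappa = d(\phi_{k,l}-nd\zeta_{k,l})$ and then formally sums the series as a geometric progression, arriving at the Dirichlet kernel $|\Sigma(f_m\kappa_m-f_c\kappa)|^2$ whose peak at zero argument yields the two nulling conditions; strictly speaking the geometric-series identity does not apply when the exponent carries an $(n-1)^2$ term, so that step is heuristic. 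Your triangle-inequality argument ($|\sum_n e^{\mathrm{j}\psi_n}|\le N$ with equality iff all summands align, and a quadratic polynomial in $(n-1)$ with vanishing constant term is constant over $N\ge 3$ points only if both coefficients vanish) reaches the same conditions without that sleight of hand, and you are right to flag the mod-$2\pi$/grating-lobe caveat, which the paper silently ignores. The only thing your route gives up is the explicit $\Sigma(\cdot)$ expression for the gain away from the peak, which the paper uses only as commentary on how the focused power decays as $|f_m-f_c|$ grows, not as part of establishing the stated identities.
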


\begin{proof}
	Define the array gain achieved by $\mathbf{v}_m$ on an arbitrary user location $\{\phi_{k,l},r_{k,l}\}$ with steering vector $\mathbf{u}$ as
	\begin{align}
	&G(\phi_{k,l},r_{k,l},{m}) = \frac{|\mathbf{u}^\textsf{H} \mathbf{v}_m|^2}{N^2} \nonumber\\
	& = \frac{1}{N^2} \left|\sum_{n = 0}^{N-1} e^{\mathrm{j}\frac{2\pi}{c_0}  \left[ nd\left(f_m\bar{\phi}_{k,m,l} - f_c \phi_{k,l} \right) - n^2d^2(f_m\bar{\zeta}_{k,m,l} - f_c \zeta_{k,l} )  \right]    }   \right|^2 \nonumber\\
	&=  \frac{1}{N^2} \left|\sum_{n = 0}^{N-1} e^{\mathrm{j}\frac{2\pi n}{c_0}  (f_m\kappa_m -  f_c\kappa )    }   \right|^2 \nonumber\\
	&= \frac{1}{N^2} \left| \frac{1 - e^{-\mathrm{j}2\pi N (f_m\kappa_m -  f_c\kappa)   }}{1 - e^{-\mathrm{j}2\pi (f_m\kappa_m -  f_c\kappa)  }}   \right|^2  \nonumber\\
	&= \frac{1}{N^2}\left| \frac{\sin (\pi N(f_m\kappa_m -  f_c\kappa) )}{\sin (\pi (f_m\kappa_m -  f_c\kappa))}    \right|^2 \nonumber \\
	& = |\Sigma( f_m\kappa_m -  f_c\kappa )|^2. \label{arrayGain}
	\end{align}
	
	Furthermore, we define
	\begin{align}
	\bar{\zeta}_{k,m,l} &=  \eta_m \zeta_{k,l}  = \frac{1 - \bar{\phi}_{k,m,l}^2}{2 \bar{r}_{k,m,l}}, \\
	\kappa_m &= d(\bar{\phi}_{k,m,l} - n d \bar{\zeta}_{k,m,l}), \\
	\kappa &= d({\phi}_{k,l} - n d \zeta_{k,l})
	\end{align}
	The array gain in (\ref{arrayGain}) implies that most of the power is focused only on a small portion of the beamspace due to the power-focusing capability of $\Sigma(a)$, which substantially reduces across the subcarriers as $|f_m - f_c|$ increases. Furthermore, $|\Sigma( a)|^2$ gives peak when $a = 0$, i.e., $f_m\bar{\phi}_{k,m,l} =  f_c\phi_{k,l}$ and $f_m\bar{\zeta}_{k,m,l} = f_c\zeta_{k,l}$. Therefore, we have 
	\begin{align}
	\bar{\phi}_{k,m,l} = \eta_m \phi_{k,l} .
	\end{align}
	Then, by using $	\bar{\zeta}_{k,m,l} =  \eta_m \zeta_{k,l}$, we get
	\begin{align}
	\label{physical_spatial_ranges}
	\bar{r}_{k,m,l} =   \frac{1 - \eta_m^2 \phi_{k,l}^2}{\eta_m(1 -\phi_{k,l}^2)}r_{k,l}.
	\end{align}
\end{proof}

Finally, by combining (\ref{r_approx}), (\ref{physical_spatial_directions}) and (\ref{physical_spatial_ranges}),  we define the NB in terms of DoAs and the ranges of the users  as
\begin{align}
\label{beamSplit2}
\Delta(\phi_{k,l},m) &= \bar{\phi}_{k,m,l} - \phi_{k,l} = (\eta_m -1)\phi_{k,l}, \\
\Delta(r_{k,l},m) &= \bar{r}_{k,m,l} - r_{k,l} = \left(\frac{1 - \eta_m^2 \phi_{k,l}^2}{\eta_m(1 -\phi_{k,l}^2)} - 1 \right) r_{k,l}.
\end{align}

\section{Model-Based Solution: NBA-OMP}
\label{sec:NBAOMP}

We first introduce how the NBA dictionary is designed, then present the implementation steps of the proposed NBA-OMP technique.

\subsection{NBA Dictionary Design}
The key idea of the proposed NBA dictionary design is to utilize the prior knowledge of $\eta_m$ to obtain beam-split-corrected steering vectors. In other words, for an arbitrary physical DoA and range, we can readily find the spatial DoA and ranges as $\bar{\phi} = \eta_m\phi$ and $\bar{r} = \frac{1 - \eta_m^2 \phi^2}{\eta_m(1 - \phi^2)} r$. Using this observation, we design the NBA dictionary $\mathcal{C}_m$ composed of steering vectors $\mathbf{c}(\phi_m,r_m)\in\mathbb{C}^N$ as
\begin{align}
\mathcal{C}_m = \{\mathbf{c}(\phi_m,r_m) |  \phi_m\in [-\eta_m,-\eta_m], r_m \in \mathbb{R}^+  \},
\end{align}
where the $n$-th element of $\mathbf{c}(\phi_m,r_m)$ is 
\begin{align}
\label{steeringVec2c}
[\mathbf{c}(\phi_m,r_m)]_n =  e^{\mathrm{j} 2\pi \frac{f_m}{c_0}\left(  (n-1)d\phi_m  - (n-1)^2 d^2 \zeta_m\right) },
\end{align}
where $\zeta_m= \frac{1 - \phi_m^2}{2r_m}  $.

Using the NBA dictionary $\mathcal{C}_m$, instead of SI steering vectors $\mathbf{a}(\phi,r)$, the SD virtual steering vectors $\mathbf{c}(\phi_m,r_m)$ can be constructed for the OMP algorithm. Once the beamspace spectra is computed via OMP, the sparse channel support corresponding to the SD spatial DoA and ranges are obtained. Then, one can readily find the physical DoAs and ranges as $\phi = \phi_m/\eta_m$ and ${r} =    \frac{\eta_m(1 - \phi^2)}{1 - \eta_m^2 \phi^2}r_m$,  $\forall m\in \mathcal{M}$.


The proposed NBA dictionary $\mathcal{C}_m$ also holds spatial orthogonality as
\begin{align}
\lim_{N \rightarrow +\infty} |\mathbf{c}^\textsf{H}(\phi_{m,i},r_{m,i}) \mathbf{c}(\phi_{m,j},r_{m,j})  | = 0, \forall  i\neq j.
\end{align}
In the next section, we present the channel estimation procedure with the proposed NBA dictionary.


\subsection{Near-field Channel Estimation}
In downlink, the channel estimation stage is performed simultaneously  during channel training by all the users, i.e., $k\in \mathcal{K} = \{1,\cdots, K\}$. Since the BS employs hybrid beamforming architecture, it activates only a single RF chain in each channel use to transmit the pilot signals during channel acquisition~\cite{heath2016overview}. Denote $\tilde{\mathbf{S}}[m] = \mathrm{diag}\{\tilde{s}_1[m],\cdots, \tilde{s}_P[m]\}$ as the $P\times P$ orthogonal pilot signal matrix, then the BS employs $P$ beamformer vectors as $\tilde{\mathbf{F}}= [\tilde{\mathbf{f}}_1,\cdots, \tilde{\mathbf{f}}_P]\in \mathbb{C}^{N\times P}$ ($|\tilde{\mathbf{f}}_p| = 1/\sqrt{N}$) to send $P$ orthogonal pilots, which are collected by the $k$-th user as
\begin{align}
\mathbf{y}_k[m] = \tilde{\mathbf{S}}[m]\bar{\mathbf{F}}[m] \mathbf{h}_k[m] + \mathbf{w}_k[m],
\end{align}
where $\bar{\mathbf{F}}[m] = \tilde{\mathbf{F}}^\textsf{T}[m]\in \mathbb{C}^{P\times N} $. Assume that $\bar{\mathbf{F}}[m] = \mathbf{F}\in \mathbb{C}^{P\times N}$ and $\tilde{\mathbf{S}}[m]= \mathbf{I}_P,\forall m\in \mathcal{M}$~\cite{limitedFeedback_Alkhateeb2015Jul,dovelos_THz_CE_channelEstThz2,thz_channelEst_beamsplitPatternDetection_L_Dai}, we get
\begin{align}
\label{y_vector}
\mathbf{y}_k[m] = \mathbf{F}\mathbf{h}_k[m] + \mathbf{w}_k[m].
\end{align}
Note that the solution via traditional techniques, e.g.,  the least squares (LS) and MMSE estimator can be readily given respectively as
\begin{align}
\mathbf{h}_k^{\mathrm{LS}}[m] &= ({\mathbf{F}}^\textsf{H}{\mathbf{F}})^{-1}{\mathbf{F}}^\textsf{H}{\mathbf{y}}_k[m]  ,  \\
\mathbf{h}_k^{\mathrm{MMSE}}[m] &= \big(\mathbf{R}_k^{-1}[m] + {\mathbf{F}}^\textsf{H} {\mathbf{R}}_k^{-1}[m]{\mathbf{F}}  \big)^{-1}{\mathbf{F}}^\textsf{H}{\mathbf{y}}_k[m], 	\label{ls_MMSE}
\end{align}
where $\mathbf{R}_k[m] = \mathbb{E}\{\mathbf{h}_k [m]\mathbf{h}_k^{\textsf{H}}[m]\}$ is the channel covariance matrix. Nevertheless, these methods require either prior information (e.g., MMSE) or have poor estimation performance (e.g., LS)~\cite{elbir2020_FL_CE,channelEstimation1}. Furthermore, these methods require at least $P \geq N$ pilot signals, which can be heavy for channel training while our proposed approach exhibits a much lower channel training overhead (e.g, see Sec.~\ref{sec:Cpmplexity}).

By exploiting the THz channel sparsity, the THz channel can be represented in a sparse domain via support vector $\mathbf{x}_k[m]\in\mathbb{C}^{Q}$, which is an $L$-sparse vector, whose non-zero elements corresponds to the set
\begin{align}
\{ x_{k,l}[m]| x_{k,l}[m]\triangleq\sqrt{\frac{N}{L}} \alpha_{k,m,l}e^{-j2\pi \tau_{k,l}f_m} \}.
\end{align} 
Then, the received signal in (\ref{y_vector}) is rewritten in sparse domain as
\begin{align}
\mathbf{y}_k[m] = \mathbf{F} \mathbf{C}_m \mathbf{x}_k[m] + \mathbf{w}_k[m],
\end{align}
where $\mathbf{F}$ is an $P \times N$ fixed matrix corresponding to the hybrid beamformer weights during data transmission, and $\mathbf{C}_m$ is the ${N\times Q}$ NBA dictionary matrix covering the spatial domain with $\phi_{m,q} \in [-\eta_m,\eta_m]$ ($\phi \in [-1,1]$) and $r_{m,q}\in [0, F]$ for $q=1,\cdots, Q$ as
\begin{align}
\mathbf{C}_m = [\mathbf{c}(\phi_{m,1},r_{m,1}),\cdots,\mathbf{c}(\phi_{m,Q},r_{m,Q}) ].
\end{align}

By utilizing the NBA dictionary matrix $\mathbf{C}_m$, we employ the OMP algorithm to effectively recover the sparse channel support. The proposed NBA-OMP technique is presented in Algorithm~\ref{alg:BSACE}, which accepts $\mathbf{y}_k[m]$, $\mathbf{C}_m$ and $\eta_m$ as inputs and it yields the output as the estimated channel $\hat{\mathbf{h}}_k[m]$, and the near-field beam-split in terms of DoA and ranges, i.e.,  $\hat{\Delta}(\phi_{k,l},m)$ and $\hat{\Delta}(r_{k,l},m)$, respectively. In steps $4-6$ of Algorithm~\ref{alg:BSACE}, the orthogonality between the residual observation vector $\mathbf{r}_l[m]$ and the columns of the dictionary matrix is checked as
\begin{align}
q^\star = \argmax_q \sum_{m=1}^{M}|\mathbf{c}^\textsf{H}(\phi_{m,q},r_{m,q})\mathbf{F}^\textsf{H}\mathbf{r}_{l-1}[m] |,
\end{align}
where $q^\star$ denotes the support index for the $l$th path iteration. Then, the DoA and range estimates can be found as $\hat{\phi}_{k,l} =\frac{\phi_{m,q^\star}}{\eta_m}$ and  $\hat{r}_{k,l} =    \frac{\eta_m(1 - \phi_{m,q^\star}^2)}{1 - \eta_m^2 \phi_{m,q^\star} } r_{m,q^\star}$, respectively. In a similar way, the corresponding beam-split terms are $\hat{\Delta}(\phi_{k,l},m) = (\eta_m-1) \hat{\phi}_{k,l}$ and $\hat{\Delta}(r_{k,l},m) \hspace{-3pt}=\hspace{-2pt}(\eta_m\hspace{-2pt} -\hspace{-2pt}1) \frac{1 - \eta_m^2 \hat{\phi}_{k,l}^2}{\eta_m(1 -\hat{\phi}_{k,l}^2)}\hat{r}_{k,l} $. Once the DoA and ranges are obtained, the THz channel is constructed from the estimated support as $\hat{\mathbf{h}}_k[m] = \boldsymbol{\Xi}_k\hat{\mathbf{u}}_k[m]$ where $\boldsymbol{\Xi}_k = [\mathbf{a}(\hat{\phi}_{k,1},\hat{r}_{k,1}), \cdots, \mathbf{a}(\hat{\phi}_{k,L}, \hat{r}_{k,L})]$ and $\hat{\mathbf{u}}_k[m] = \boldsymbol{\Psi}_m^\dagger(\mathcal{I}_{L}) \mathbf{y}_k[m]$, wherein $\boldsymbol{\Psi}_m(\mathcal{I}_L) = \mathbf{F}\mathbf{C}_m(\mathcal{I}_{L})$ and $\mathcal{I}_{L} $ includes the estimated support indices.


\begin{algorithm}[t]
	\begin{algorithmic}[1] 
		\caption{ \bf NBA-OMP}
		\Statex {\textbf{Input:}  Dictionary $\mathbf{C}_m$, observation $\mathbf{y}_k[m]$ and $\eta_m$ \label{alg:BSACE}}
		\Statex \textbf{Output:}   $\hat{\mathbf{h}}_k[m]$, $\hat{\Delta}(\phi_{k,l},m)$ and $\hat{\Delta}(r_{k,l},m)$.
		\State \textbf{for} $k \in\mathcal{K}$
		\State \indent $l=1$, $\mathcal{I}_{l-1} = \emptyset$,
		$\mathbf{r}_{l-1}[m] = \mathbf{y}_k[m], \forall m\in \mathcal{M}$.
		\State \indent\textbf{while} $l\leq L$ \textbf{do}
		
		\State \indent\indent $q^\star = \argmax_q \sum_{m=1}^{M}|\mathbf{c}^\textsf{H}(\phi_{m,q},r_{m,q})\mathbf{F}^\textsf{H}\mathbf{r}_{l-1}[m] |$.
		\State \indent\indent $\mathcal{I}_{l} = \mathcal{I}_{l-1} \bigcup \{q^\star\}$.
		\State \indent \indent  $\hat{\phi}_{k,l} =\frac{\phi_{m,q^\star}}{\eta_m}$.
		\State \indent \indent $\hat{r}_{k,l} =    \frac{\eta_m(1 - \phi_{m,q^\star}^2)}{1 - \eta_m^2 \phi_{m,q^\star} } r_{m,q^\star}$.
		\State \indent \indent $\hat{\Delta}(\phi_{k,l},m) = (\eta_m-1) \hat{\phi}_{k,l}$, $\forall m\in \mathcal{M}$.
		\State \indent \indent $\hat{\Delta}(r_{k,l},m) \hspace{-3pt}=\hspace{-2pt}(\eta_m\hspace{-2pt} -\hspace{-2pt}1) \frac{1 - \eta_m^2 \hat{\phi}_{k,l}^2}{\eta_m(1 -\hat{\phi}_{k,l}^2)}\hat{r}_{k,l} $, $\forall m\in \mathcal{M}$.
		\State \indent \indent $\boldsymbol{\Psi}_m(\mathcal{I}_l) = \mathbf{F}\mathbf{C}_m(\mathcal{I}_1)$.
		\State \indent \indent $\mathbf{r}_{l}[m] = \left( \mathbf{I}_P -  \boldsymbol{\Psi}_m(\mathcal{I}_l) \boldsymbol{\Psi}_m^\dagger(\mathcal{I}_l) \right) \mathbf{y}_k[m]$.
		\State \indent\indent $l= l + 1$.
		\State \indent\textbf{end while}
		\State \indent  $\boldsymbol{\Xi}_k = [\mathbf{a}(\hat{\phi}_{k,1},\hat{r}_{k,1}), \cdots, \mathbf{a}(\hat{\phi}_{k,L}, \hat{r}_{k,L})]$.
		\State \indent \textbf{for} $m\in \mathcal{M}$
		\State \indent\indent $\hat{\mathbf{u}}_k[m] = \boldsymbol{\Psi}_m^\dagger(\mathcal{I}_{l-1}) \mathbf{y}_k[m]$. \\
		\indent \indent \indent $\hat{\mathbf{h}}_k[m] = \boldsymbol{\Xi}_k\hat{\mathbf{u}}_k[m]$.
		\State \indent\textbf{end for}
		\State \textbf{end for}
	\end{algorithmic} 
\end{algorithm}

\section{Model-Free Solution: NBA-OMP-FL}
\label{sec:FL}
Due to operating at small wavelengths and severe path loss, ultra-massive arrays are envisioned to be employed at THz communications. This results in very large number of array data to be processed. For instance, when employing learning-based techniques, the size of the datasets is proportional to the number of antennas. Thus, huge communication overhead occurs due to the transmission of these datasets in CL-based techniques~\cite{elbir_FL_PHY_Elbir2021Nov,elbir2022Jul_THz_CE_FL,fl_By_Google}. To alleviate this overhead, we present an FL-based near-field THz channel estimation scheme in this section.

Denote $\boldsymbol{\theta}\in\mathbb{R}^Z$  and $\mathcal{D}_k$ as the vector of learnable parameters and the local dataset for the $k$-th user, respectively. Then, we have the  relationship $f(\boldsymbol{\theta})$  between the input of the learning model represented by $\boldsymbol{\theta}$ and its prediction as $	\mathcal{Y}_k^{(i)} = f(\boldsymbol{\theta}) \mathcal{X}_k^{(i)}$. Here,  the sample index for the local dataset is $i = 1,\dots, \textsf{D}_k$, where $\textsf{D}_k = |\mathcal{D}_k|$ denotes the number of samples in the $k$-th local dataset. Furthermore, we have $\mathcal{X}_k^{(i)}$ and $\mathcal{Y}_k^{(i)}$ representing the input and output, respectively,  such that the input-output tuple of the local dataset is $\mathcal{D}_k^{(i)} = (\mathcal{X}_k^{(i)},\mathcal{Y}_k^{(i)})$. 

The input data is the received pilots, i.e., $\mathbf{y}_k[m]$. For the output (label) data, we consider the channel estimates obtained via the proposed NBA-OMP approach in Algorithm~\ref{alg:BSACE}. Then, by using $\hat{\mathbf{h}}_k[m]$, the output data is constructed as
\begin{align}
\mathcal{Y}_k = \left[\operatorname{Re}\{\hat{\mathbf{h}}_k[m]\}^\textsf{T},\operatorname{Im}\{\hat{\mathbf{h}}_k[m]\}^\textsf{T}\right]^\textsf{T}\in \mathbb{R}^{2N}.
\end{align}

When designing the input data, the real, imaginary and angle information of $\mathbf{y}_k[m]$ are incorporated in order to improve the feature extraction performance of the learning model. Therefore, each input sample has a ``three-channel'' structure with the size of  ${N_\mathrm{RF}\times 3}$~\cite{elbir_FL_PHY_Elbir2021Nov}. Thus, we have  $[\mathcal{X}_k]_1 = \operatorname{Re}\{\mathbf{y}_k[m]\}$, $[\mathcal{X}_k]_2 = \operatorname{Im}\{\mathbf{y}_k[m]\}$ and $[\mathcal{X}_k]_3 = \angle\{\mathbf{y}_k[m]\}$. 

In order for all the users participate in the training in a distributed manner, each-user computes their model parameters and aims to minimize an average MSE cost. Thus, the FL-based model training problem can be cast as
\begin{align}
\label{flTraining}
\minimize_{\boldsymbol{\theta}} \hspace{5pt} &\frac{1}{K}\sum_{k=1}^K \mathcal{L}_k(\boldsymbol{\theta}) \nonumber \\
\subjectto \hspace{3pt} &f(\mathcal{X}_k^{(i)}| \boldsymbol{\theta}) = \mathcal{Y}_k^{(i)},
\end{align}
where $\mathcal{L}_k(\boldsymbol{\theta})$ is the loss function for the learning model computed at the $k$-th user, and it is defined as  
\begin{align}
\mathcal{L}_k (\boldsymbol{\theta}) = \frac{1}{\textsf{D}_k} \sum_{i=1}^{\textsf{D}_k} ||f(\mathcal{X}_k^{(i)}| \boldsymbol{\theta}) - \mathcal{Y}_k^{(i)} ||_{\mathcal{F}}^2,
\end{align}
where  $f(\mathcal{X}_k^{(i)}| \boldsymbol{\theta})$ corresponds to the output of the learning model once it is fed with $\mathcal{X}_k^{(i)}$. In order to efficiently solve the optimization problem presented in (\ref{flTraining}), iterative techniques can be conducted such that the learning model parameters are computed at each iteration based on local datasets. Specifically, first the learning model parameters are computed at the users, and then they are transmitted to the server, at which they are aggregated. Therefore, the model parameter update rule for the $t$-th iteration is described by
\begin{align}
\boldsymbol{\theta}_{t+1} = \boldsymbol{\theta}_t - \varepsilon \frac{1}{K} \sum_{k=1}^{K}\boldsymbol{\beta}_k(\boldsymbol{\theta}_t),
\end{align} 
for $t = 1,\dots, T$. Here,  $\boldsymbol{\beta}_k(\boldsymbol{\theta}_t) = \nabla \mathcal{L}_k(\boldsymbol{\theta}_t)\in\mathbb{R}^Z$  is the gradient vector and $\varepsilon$ is the learning rate.

During model training, each user simply transmits their model update vector $\boldsymbol{\beta}_k(\boldsymbol{\theta}_t)$ to the BS for $k \in \mathcal{K}$. Assume that the model  $\boldsymbol{\beta}_k(\boldsymbol{\theta}_t)$ is divided into $U = \frac{Z}{M}$ equal-length blocks for each subcarrier as
\begin{align}
\boldsymbol{\beta}_k(\boldsymbol{\theta}_t) = [\boldsymbol{x}_K^{(t)^\textsf{T}}[1],\dots,\boldsymbol{x}_K^{(t)^\textsf{T}}[M]]^\textsf{T},
\end{align}
where $\boldsymbol{x}_k^{(t)}[m] \in \mathbb{C}^{U }$ denotes data symbols (model updates) of the $k$-th user, that are transmitted to the server to be aggregated. The BS, then, receives the following $N\times U$ uplink signal as
\begin{align}
\boldsymbol{Y
}^{(t)}[m] = \sum_{k=1}^K \boldsymbol{h}_k^{(t)}[m]\boldsymbol{x}_k^{(t)^\textsf{T}}[m] + \boldsymbol{N}[m],
\end{align}
where $\boldsymbol{h}_k^{(t)}[m]  \in \mathbb{C}^{N}$  represents the uplink channel during model transmission and $\boldsymbol{N}[m]\in \mathbb{C}^{N\times U}$ denotes the noise term added onto the transmitted data.


\section{Complexity and  Overhead}
\label{sec:Cpmplexity}
\subsection{Computational Complexity}
The computational complexity of NBA-OMP is the same as traditional OMP techniques~\cite{heath2016overview}, and it is mainly due to the matrix multiplications in step $4$ $(O(QMNP(NP + P)))$, step $10$ $(O(PNL\bar{L}))$, step $11$ $(O(P^2L\bar{L} + P^2   ))$ and step $16$  $(O(L(P + N)))$, where $\bar{L}=(L+1)/2$. Hence, the overall complexity is $O( QMNP(NP + P) + (PL\bar{L} + 1)(N + P) )$.

\subsection{Channel Training Overhead} The channel training overhead of the NBA-OMP  requires only $P$ ($8$ times lower, see Sec.~\ref{sec:Sim}) channel usage for pilot signaling while the traditional approaches, e.g., LS and MMSE estimation, need at least $N$ channel usage.

\begin{table}[t]
	\caption{Learning Model Parameters}
	\label{tableParameters}
	\centering
	\begin{tabular}{|c|c|c|}
		\hline
		\hline
		\bf Layer&\bf Type&\bf Parameters      \\
		\hline
		1 &	Input & Size: $N_\mathrm{RF}\times3$ 	  \\
		\hline
		2 &	Convolutional & Size: $3\times3$ @128 	  \\
		\hline
		3 &	Normalization & -	  \\
		\hline
		4 &	Convolutional & Size: $3\times3$ @128 	  \\
		\hline
		5 &	Normalization & -	  \\
		\hline
		6 &	Convolutional & Size: $3\times3$ @128 	  \\
		\hline
		7 &	Normalization & -	  \\
		\hline
		8 &	Fully Connected & Size: $1024\times 1$	  \\
		\hline
		9 &	Dropout & Ratio: $0.5$	  \\
		\hline
		10 &Fully Connected & Size: $1024\times 1$	\\
		\hline
		11 &	Dropout & Ratio: $0.5$	  \\
		\hline
		12 &	Output (Regression) & Size: $2N\times 1$	\\
		\hline
		\hline 
	\end{tabular}
\end{table}

\subsection{Model Training Overhead}
The communication overhead of a learning-based approach can be measured in terms of amount of data symbols transmitted during/for model training~\cite{elbir2020_FL_CE,elbir2022Jul_THz_CE_FL,fl_By_Google,FL_gunduz_fading,FL_Gunduz}.  Thus, the communication overhead (i.e., $\mathcal{T}_\mathrm{CL}$) for the CL involves the amount of dataset transmitted from the users to the server whereas it can be computed for FL (i.e., $\mathcal{T}_\mathrm{FL}$) as the amount of model parameters exchanged between the users and the server during training. Then, we have 
\begin{align}
\label{tcl}
\mathcal{T}_\mathrm{CL} = \sum_{k = 1}^{K}\textsf{D}_k(3N_\mathrm{RF} + \xi 2N),
\end{align}
where $\textsf{D}_k$ is the number of samples of the $k$-th dataset and $3N_\mathrm{RF} + 2N$ is the number of symbols in each input-output (i.e., $\mathcal{X}_k^{(i)}-\mathcal{Y}_k^{(i)}$) tuple. In (\ref{tcl}), $\xi$ denotes whether the labels are also transmitted from the users to the server. Thus, if the labels are computed at the users, they are transmitted to the server with additional overhead, and we have $\xi = 1$. On the other hand,  $\xi = 0$ (labels are not computed at the user) if the received inputs are transmitted to the server, at which the labeling (channel estimation) is handled, thereby less communication overhead is achieved. Next, we defined the communication overhead for FL as
\begin{align}
\mathcal{T}_\mathrm{FL} = 2 Z TK,
\end{align}
which involves a two-way (user $\rightleftarrows$ server)  transmission of the learnable  parameters from $K$ users to the server for $T$ consecutive iterations.

\begin{figure*}[t!]
	\centering		{\includegraphics[draft=false,width=1\textwidth]{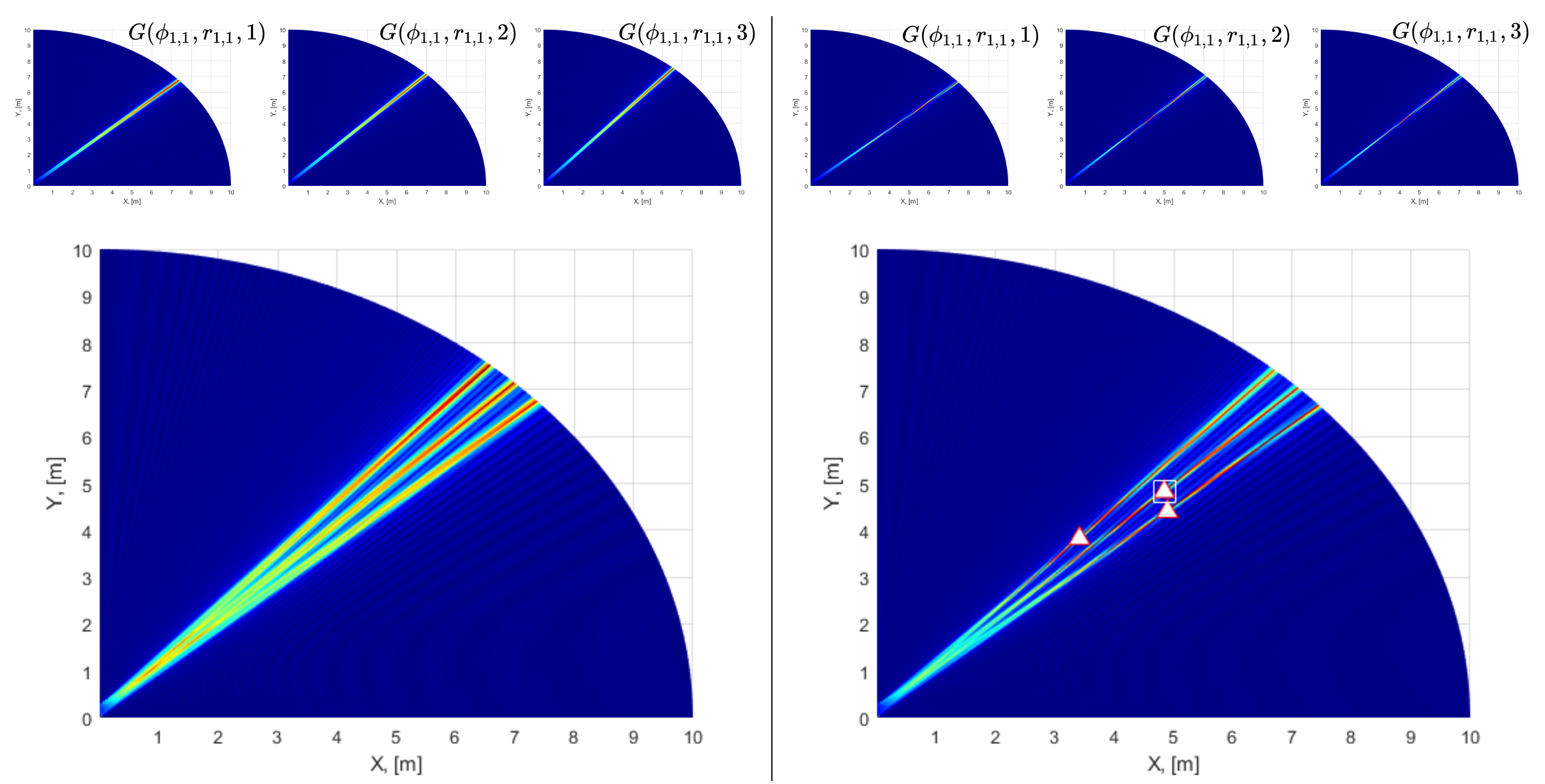}} 
	\caption{Array gains $G(\phi_{1,1},r_{1,1},{m})$ in Cartesian coordinates for a single user ($K=1$, $L=1$) located in the far-field $(45^\circ,6000\text{m})$ (left) and near-field $(45^\circ,6\text{m})$ (right), respectively. Here, $M=3$, $f_c=300$ GHz, and $B=30$ GHz. The top panel shows the gain for different subcarriers which are summed up to produce a composite array gain at the bottom for both far- and near-field cases clearly showing the beam-split. The square represents the user location while the triangles correspond to the spatial locations (where the maximum array gain is achieved) at different subcarriers. Whereas the far-field beam-split is only angular, the near-field split is across both range and angular domains.
		\vspace{-8pt}}
	\label{fig_BS}
\end{figure*}

\section{Numerical Simulations}
\label{sec:Sim}
We evaluate the performance of our proposed channel estimation approaches, in comparison with the state-of-the-art channel estimation techniques, e.g., far-field OMP (FF-OMP)~\cite{beamSquintRodriguezFernandez2018Dec}, near-field OMP (NF-OMP)~\cite{nf_OMP_Dai_Wei2021Nov}, beam-split pattern detection (BSPD)~\cite{thz_channelEst_beamsplitPatternDetection_L_Dai} as well as LS and MMSE. Note that, during the simulations, the MMSE estimator is computed subcarrier-wise such that it has a beam-split-free benchmark performance. 

Throughout the simulations, unless stated otherwise, the signal model is constructed with $f_c=300$ GHz, $B=30$ GHz, $M=128$, $K=N_\mathrm{RF} = 8$, $L=3$, $P=8$ and $N=256$, for which the Fraunhofer distance is $F = 32.76$ m. The NBA dictionary matrix is constructed with $Q=10N$, and the user directions and ranges are selected as $\tilde{\phi}_{k,l}\in \mathrm{unif}[-\frac{\pi}{2},\frac{\pi}{2}]$, $r_{k,l} \in \mathrm{unif}[5,30]$ m, respectively. 

A convolutional neural network (CNN) is designed with parameters enlisted in Table~\ref{tableParameters} through a hyperparameter optimization~\cite{FL_Gunduz,fl_By_Google}. The learning model is, then, trained for $T=100$ iterations, which involves the exchange of model parameter updates between the users and the server. The learning rate is selected as $\varepsilon= 0.001$. For  dataset generation, we first generated $V=1000$ independent channel realizations. Then, synthetic noise with AWGN is added onto the generated channel data in order to resemble the characteristics related to the imperfections/distortions in the wireless channels. This synthetic noise level is obtained over pre-determined noise level for three signal-to-noise ratio (SNR) levels, i.e., $\mathrm{SNR}_\mathrm{TRAIN} = \{15,20,25\}$ dB such that $G = 1000$ realizations of each channel data are obtained in order to improve robustness~\cite{elbir2020_FL_CE}. During channel generation, we considered the local datasets with non-identical distributions such that  the $k$-th user's dataset is selected with the DoA information as  $\vartheta_{k,l} \in [-\frac{\pi}{2} + \frac{\pi}{K} (k-1), -\frac{\pi}{2} + \frac{\pi}{K}k)$.

\begin{figure}[t]
	\centering		{\includegraphics[draft=false,width=.5\columnwidth]{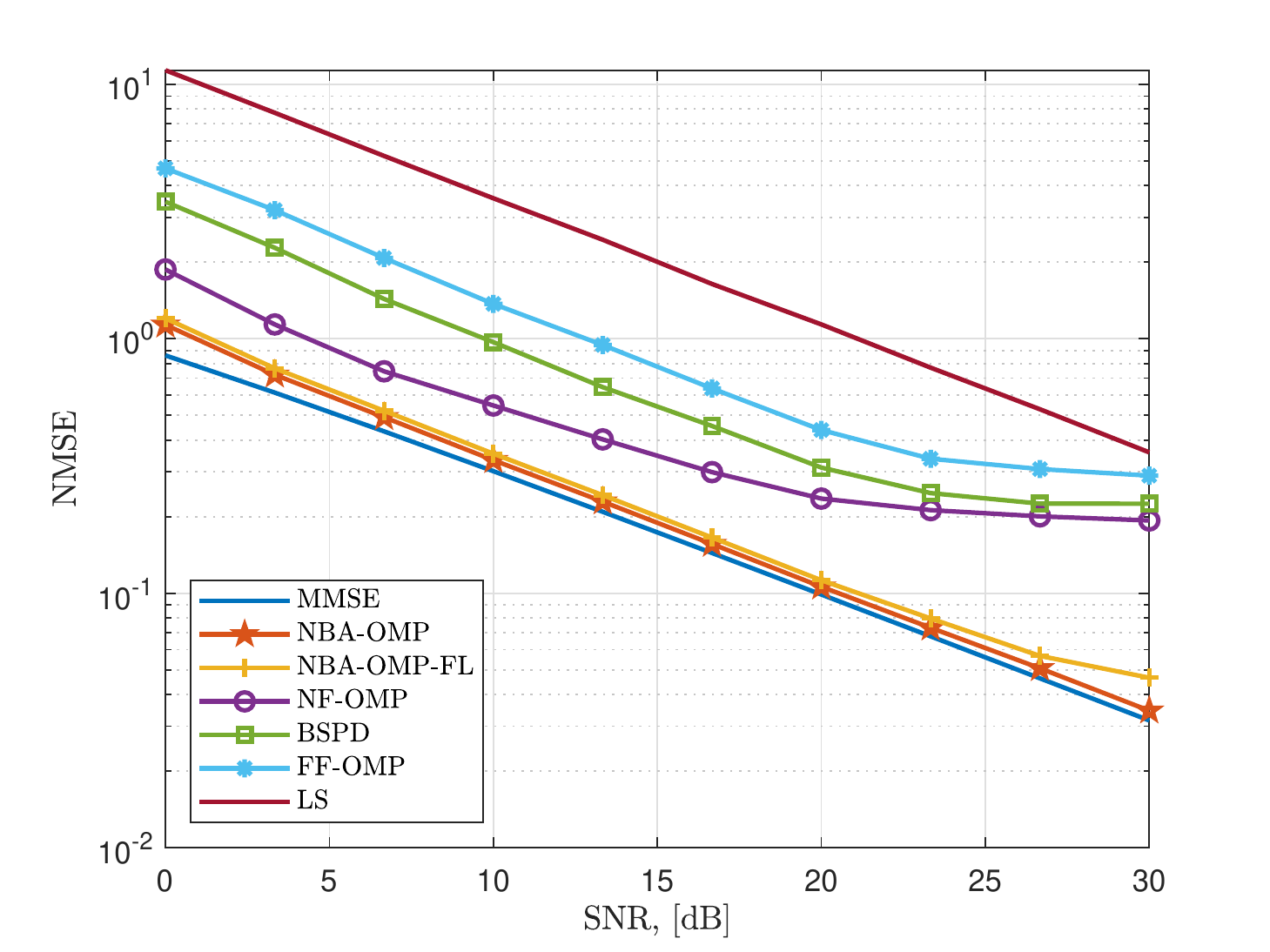}}  
	\caption{Near-field THz wideband channel estimation NMSE versus SNR. $N=256$, $f_c=300$ GHz, $M=128$ and $B=30$ GHz. 
	}
	\label{fig_NMSE_SNR}
\end{figure}

Fig.~\ref{fig_BS} shows the array gain in Cartesian coordinates for both far- and near-field scenarios, respectively.  The transmitter is located at $(0,0)$ m while the the user is located at $45^\circ$ with the distance of $6000$ m (far-field) and $8$ m (near-field) when the total number of subcarriers is $M=3$, $f_c=300$ GHz and $B=30$ GHz (i.e., $f_\mathrm{Low} = 285$ GHz and $f_\mathrm{High} = 315$ GHz). In the far-field model, we can see that the array gain is maximized at the longest distance in the simulations with a deviation in the direction only. However, for the near-field case,  the array gain gives a peak at different locations (i.e., different direction and range) as the subcarrier frequency changes due to NB. This observations clearly shows that NB can lead to significant performance loses due to location/direction mismatch between the physical and spatial domains. Thus, the effect of NB should be compensated as discussed in the following experiments.

\begin{figure}[t]
	\centering		{\includegraphics[draft=false,width=.5\columnwidth]{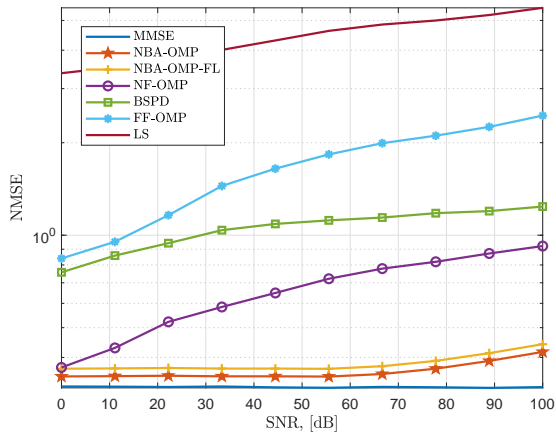}}  
	\caption{Near-field THz wideband channel estimation NMSE versus bandwidth when  $\mathrm{SNR}=10 $ dB.
	}
	\label{fig_NMSE_BW}
\end{figure}

Fig.~\ref{fig_NMSE_SNR} shows the channel estimation NMSE performance under various  SNR levels. We can see that the proposed NBA-OMP and NBA-OMP-FL techniques outperform the competing methods and closely follows the MMSE performance. The superior performance of NBA-OMP can be attributed to accurately compensating the effect of beam-split over both direction and range parameters via NBA dictionary, which is composed of SD steering vectors. On the other hand, the remaining methods fail to exhibit such high precision in high SNR regime. This is because these methods either fail to take into account the NB~\cite{nf_OMP_Dai_Wei2021Nov} or only consider the far-field signal model~\cite{alkhateeb2016frequencySelective,thz_channelEst_beamsplitPatternDetection_L_Dai}. Furthermore, the proposed approach does not require an additional hardware components in order to realize SD dictionary matrices steering vectors, hence it is hardware-efficient. Compared to NBA-OMP, the FL-based approach has slight performance loss due to model training with unevenly distributed datasets. As a result, the performance of NBA-OMP behaves like a yardstick for the FL approach since the learning-based methods cannot perform better than their labels~\cite{elbir_FL_PHY_Elbir2021Nov}.

Fig.~\ref{fig_NMSE_BW} compares the NMSE performance with respect to the bandwidth $B\in [0,100]$ GHz. We can see that the proposed NBA-OMP approaches effectively compensate the impact of beam-split for a large portion of the bandwidth up to $B<70$ GHz. 

Finally, we analyze the communication overhead of the proposed FL approach. To this end, we first compute the overhead for CL as $\mathcal{T}_\mathrm{CL} = 8\cdot\textsf{D}_k\cdot (3\cdot 8 ) = 24.57\times 10^{9}	$ when $\xi = 0$ (note that $\mathcal{T}_\mathrm{CL} = 6.31\times 10^{12}$ when $\xi = 1$), where the number of input-output tuples in the dataset is $\textsf{D}_k = 3M VG = 3\cdot 128\cdot 1000\cdot 1000 = 384\times 10^{6} $. On the other hand, the communication overhead for FL is $\mathcal{T}_\mathrm{FL} = 2\cdot 1,196,928\cdot 100\cdot 8 = 1.91\times 10^{9}$, which exhibits approximately $12$ ($3300$) times lower than that of CL when $\xi = 0$ ($\xi = 1$). This observation shows the effectiveness of the FL approach for channel estimation tasks, especially when the number of antennas is very high. It is also worthwhile mentioning that the gain achieved by employing FL is huge if the  labeling is handled at the server, i.e., $\xi = 0$.	Based on this analysis, Fig.~\ref{fig_NMSE_Training} demonstrates the channel estimation NMSE and communication-efficiency ratio, i.e., $\mathcal{R} = \frac{\mathcal{T}_\mathrm{CL}}{\mathcal{T}_\mathrm{FL}}$ with respect to the dataset size $\textsf{D} = \sum_{k = 1}^{K} \textsf{D}_k$ when $\mathrm{SNR}= 10 $ dB. We can see that smaller datasets can be used for training the learning model for faster convergence at the cost of low NMSE performance. A satisfactory channel estimation performance can be achieved (i.e., $\mathrm{NMSE} \approx 10^{-1}$) if $\textsf{D} \approx 3\times 10^{8}$.  As expected, larger datasets result in longer training times as we observed from the simulations that the number of iterations is approximately $T= 100$ for such settings. In this scenario, the communication-efficiency ratio $\mathcal{R}$ is about 12 and 3300 when $\xi = 0$ and $\xi =1$, respectively.

\begin{figure}[t]
	\centering		\subfloat[]{\includegraphics[draft=false,width=.5\columnwidth]{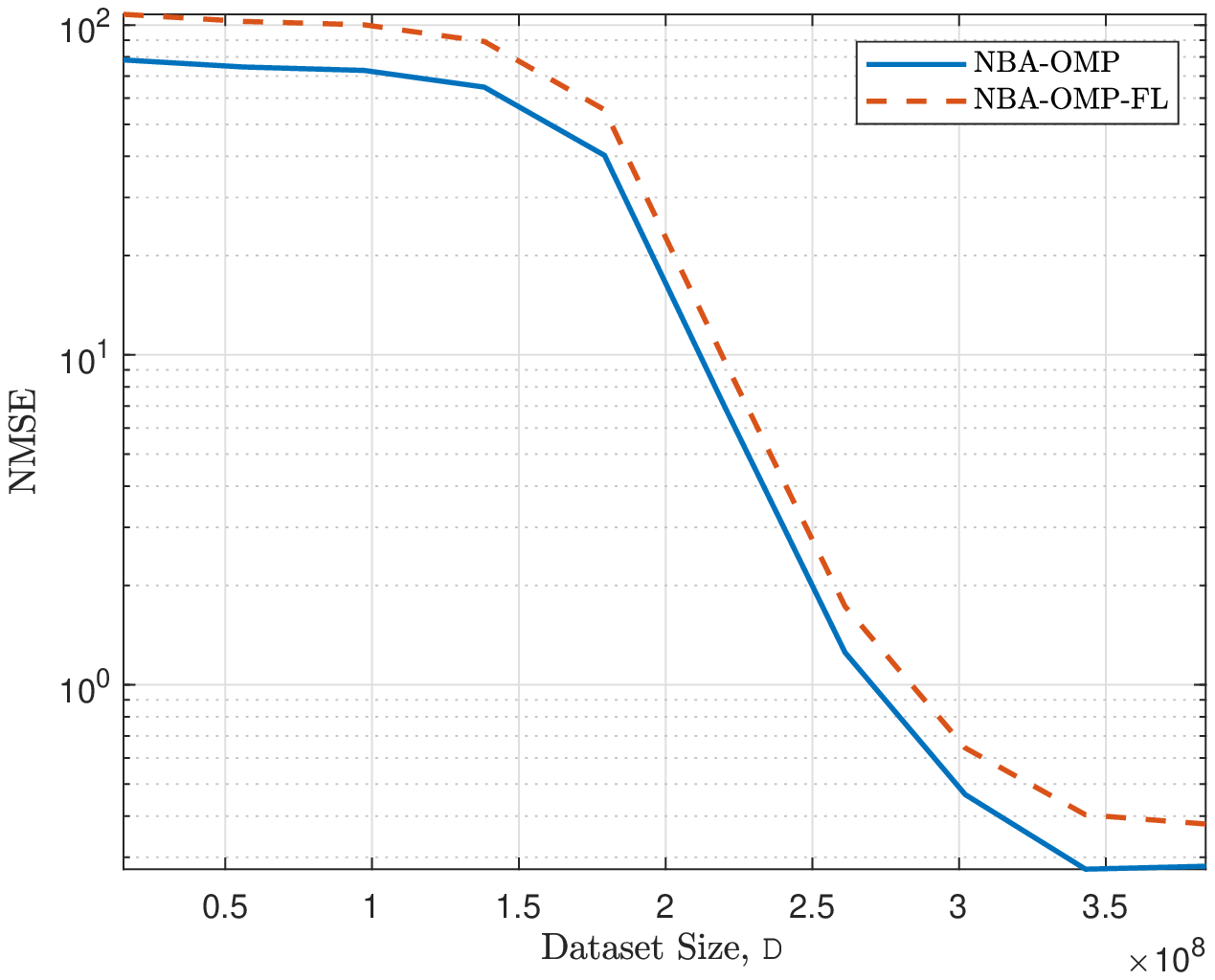}} 
	\subfloat[]{\includegraphics[draft=false,width=.5\columnwidth]{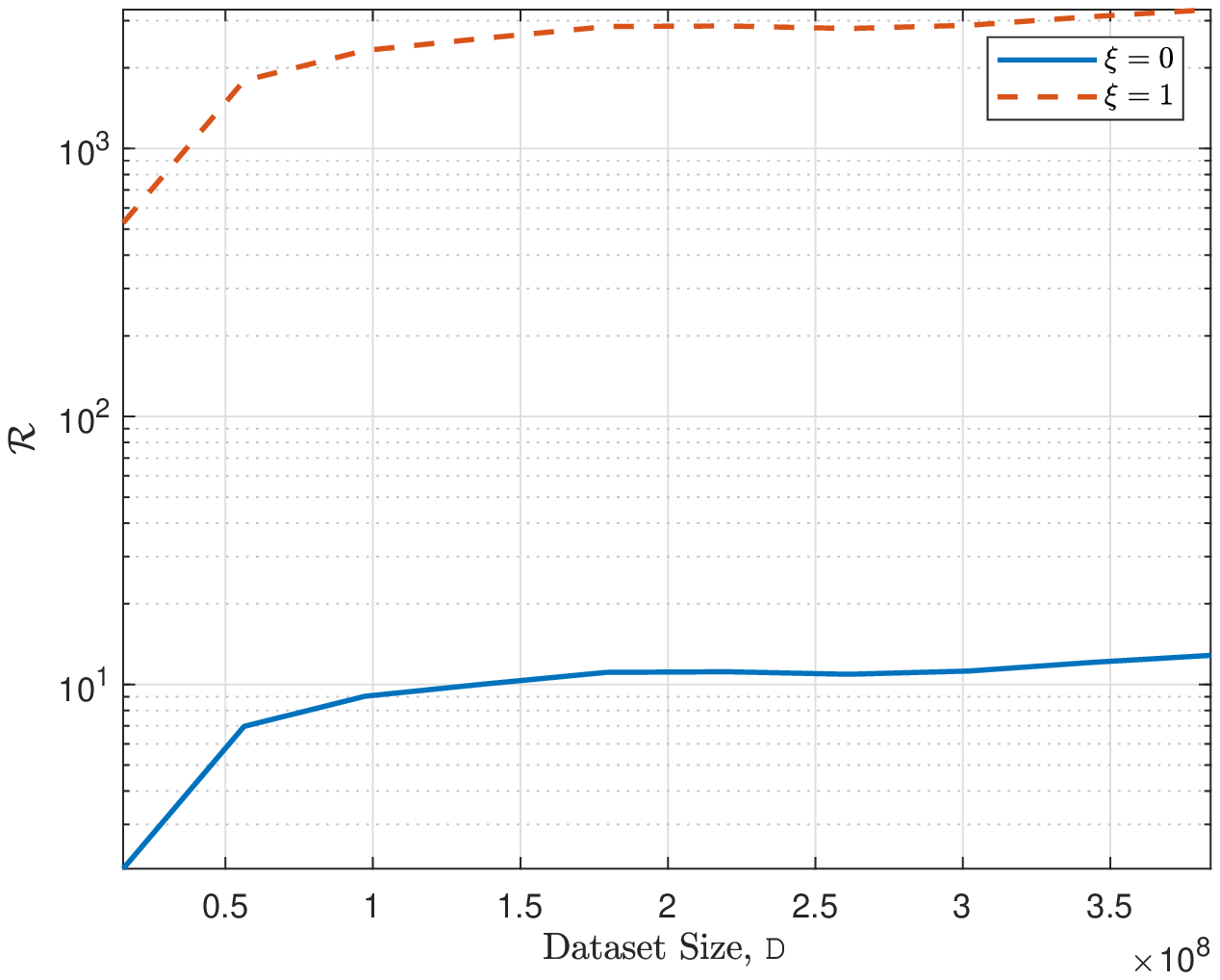}}
	\caption{Channel estimation NMSE (a) and communication-efficiency ratio (b) when  $\mathrm{SNR}=10 $ dB.
	}
	\label{fig_NMSE_Training}
\end{figure}

\section{Conclusions}
\label{sec:Conc}
We considered the near-field wideband THz channel estimation problem in the presence of near-field beam-split. We introduced both a model-based and a model-free approaches to effectively estimate the channel with lower complexity and overhead. The model-based approach is based on the OMP technique, for which an NBA dictionary is designed such that the SD DoA and range parameters are accurately matched, hence, the beam-split-corrected channel estimate is obtained. The model-free approach is based on FL scheme such that the data labels are obtained from the model-based approach. The proposed approach has lower channel and model training overhead as compared to existing techniques. Specifically, the proposed model-based approach achieves close-to-MMSE performance with $8$ times less channel usage than the conventional techniques whereas the proposed model-free technique enjoys $12$ times less communication overhead as compared to the centralized schemes.  The proposed NBA-OMP is particularly useful for THz-band applications, wherein ultra-massive number of antennas are used leading to high computational complexity, and the near-field beam-split may cause degradations in the system performance.

\balance
\bibliographystyle{IEEEtran}
\bibliography{IEEEabrv,references_119}

\end{document}